\title{A Quantum Energy Inequality for a Non-commutative QFT} 
\author[1]{Harald Grosse\footnote{harald.grosse@univie.ac.at}} 
\author[2]{Albert Much\footnote{much@itp.uni-leipzig.de}}  
\affil[1]{Fakult\"at f\"ur Physik, Universit\"at Wien, Boltzmanngasse 5, A-1090 Wien, Austria}
\affil[2]{Institut f\"ur Theoretische Physik, Br\"uderstra\ss e 16\\ Universit\"at Leipzig\\ D-04103 Leipzig, Germany} 
\titleformat*{\section}{\Large \bfseries}
\titleformat*{\subsection}{\large \bfseries}
\newcommand{\bb}[1]{\mathbbm{#1}}
\newcommand{\e}{\mathrm{e}}
\renewcommand{\i}{\mathrm{i}}
\newcommand{\ord}[1]{\,:\,\mathrel{#1}\,:\,}
\newcounter{stm}
\numberwithin{equation}{section}
\newtheorem{theorem}{\textsc{Theorem}}[section]
\newtheorem{lemma}[theorem]{\textsc{Lemma}}
\newtheorem{proposition}[theorem]{\textsc{Proposition}}
\newcommand{\R}{\mathbb{R}}
\begin{document}
\maketitle
\begin{abstract}
We present a  quantum energy inequality (QEI) for quantum field theories formulated in non-commutative spacetimes, extending fundamental energy constraints to this generalized geometric framework. By leveraging operator-theoretic methods inspired by the positivity map of Waldmann et al. \cite{waldmannpos}, we construct  linear combinations of deformed operators that generalize the commutative spacetime techniques of Fewster  et al., \cite{Few98}. These non-commutative analogs enable us the derivation of a lower bound on the deformed averaged energy density, ensuring the stability of the underlying quantum field theory.  Our result establishes rigorous constraints on the expectation values of the deformed (non-commutative) energy density, reinforcing the physical consistency of non-commutative models while preserving core principles of quantum field theory.
\end{abstract} 
 \section{Introduction}
Quantum field theory (QFT) is traditionally formulated under the assumption that spacetime is a continuous and commutative manifold, allowing for the arbitrary localization of quantum fields. However, various approaches to quantum gravity suggest that at extremely short distances, spacetime may acquire a non-commutative structure, fundamentally altering key concepts such as locality, causality, and quantum fluctuations. One way to model such effects is by introducing   coordinate operators $q$ that satisfy the commutation relations:
\begin{align}  \label{eqnccoord}
    [q_{\mu},q_{\nu}] = i\Theta_{\mu\nu}  ,
\end{align}  
where $\Theta$ is a skew-symmetric matrix that characterizes the scale of non-commutativity. This structure implies an inherent uncertainty in position measurements, which prevents the exact localization of fields and events. As a result, quantum fluctuations—fundamental to QFT—are modified in a non-commutative setting.

These modifications have profound implications for physical quantities, such as the energy density. In standard QFT, the energy density of a quantum field is not necessarily positive-definite and can assume negative values due to quantum fluctuations. In non-commutative spacetime, where fluctuations are altered, understanding and constraining such negative energy densities becomes even more critical. This brings us to quantum energy inequalities (QEIs), which play a crucial role in ensuring that these negative energy densities remain physically constrained.

 QEIs are essential in standard QFT as well, where negative energy densities are linked to exotic physical effects, including violations of the weak energy condition and potential applications in wormhole physics and the Casimir effect. Without proper bounds, unrestricted negative energy densities could lead to pathological behaviors such as instabilities or violations of causality (see \cite{IN8, Kontou_2020} and references therein).  To address these concerns, QEIs provide bounds on the expectation values of the energy density, ensuring that negative energy densities, while permissible, remain physically constrained, see \cite{NCQEI1, NCQEI2, NCQEI3, NCQEI4, NCQEI5,  Fewster_1999, Fewster_2000, Fewster_2002, Fewster_2003, Fewster_2003a, Fewster_2005, Fewster_2006, Fewster_2006a, Fewster_2006b, Fewster_2007, Fewster_2007b, Fewster_2008, Fewster_2009, Boss, fröb2023quantumenergyinequalitysinegordon}.

Quantum energy inequalities play a vital role in constraining spacetime geometries that permit causality-violating solutions. This importance, beyond their function in restricting negative energy, also extends  their relevance in the context of non-commutative quantum field theory (NCQFT).  This theory is considered a leading candidate for quantum gravity that inherently modifies the spacetime structure at fundamental scales. While NCQFT's nonlocal interactions could, in principle, enable acausal phenomena, this very prospect reinforces the necessity of generalizing QEIs to non-commutative geometries. Such inequalities would act as safeguard conditions, ensuring that causality constraints persist even in quantum regimes where traditional concepts of spacetime break down. This work aims to explore these generalizations and their implications.

In this paper, we present a quantum energy inequality for a quantum field theory formulated in non-commutative Minkowski spacetime. In~\cite{DFR}, a QFT in non-commutative Minkowski space was developed by defining its representation space as the tensor product $\mathcal{V} \otimes \mathscr{F}_{s}(\mathscr{H})$, where $\mathscr{H}$ is the one-particle Hilbert space and $\mathscr{F}_{s}(\mathscr{H})$ the corresponding symmetric Fockspace of the Klein-Gordon quantum field $\phi$ and and  $\mathcal{V}$ is a representation space of the 
   non-commutative coordinates $q$ (see Equation \eqref{eqnccoord}). Later, in Ref.~\cite{GL1}, a unitary operator was constructed to map the tensor product space $\mathcal{V} \otimes \mathscr{F}_{s}(\mathscr{H})$ onto $\mathscr{F}_{s}(\mathscr{H})$  under which the field takes the form  
\begin{align*}
 \phi_{\Theta}(f) &=   \int d^4x \,f(x)\,\phi_{\Theta}(x) \\ 
 &= \int \frac{d^3\mathbf{k}}{2\omega_{\mathbf{k}}}\,\left(  f^{-}(\bm{k})\, e^{ \frac{i}{2}k\Theta P} a (\bm{k})+ f^{+}(\bm{k})\,e^{-\frac{i}{2} k\Theta P}  a^*(\bm{k}) \right),
\end{align*}  
where    \( \omega_{\mathbf{k}} = \sqrt{\bm{k}^2+m^2} \) and \( k = (\omega_{\mathbf{k}},\bm{k}) \). Here, $f$ is a test function,   \( P \) denotes the quantum field theoretical (on-shell) momentum operator and $a$ and $a^*$ represent the particle annihilation and creation operators with definite momenta that fulfill the canonical commutation relations. Moreover,  the non-commutativity of the spacetime is encoded in the skew-symmetric  matrix of the form \begin{equation}\label{eqtheta1}
  \Theta_{\mu\nu}=  \left(\begin{matrix} 0 &\theta & 0&0\\ -\theta &0 & 0&0 \\0 &0 & 0&\theta'\\0 &0 & -\theta'&0\end{matrix}\right),
\end{equation}
where  $\theta, \theta'\in\mathbb{R}$. In order to simplify the calculations, we assume $\theta = \theta'$ in the following analysis.  These constants are from a physical perspective of Planck length order squared (see \cite{DFR}). The representation on $ \mathscr{F}_{s}(\mathscr{H})$ led to a strict deformation quantization known as \textit{warped convolutions}~\cite{BLS}. The wedge-locality proof for deformed fields in \cite{GL1} further requires the deformation parameter $\theta\in\R^{+}$ to satisfy a positive-definiteness condition. We explicitly adopt this assumption and maintain it consistently throughout the analysis.

The Grosse-Lechner framework for non-commutative quantum field theory in Minkowski spacetime, as introduced in \cite{GL1}, can be interpreted as an interacting model, specifically a field theory with a non-trivial S-matrix. This model belongs to the class of interacting theories characterized by factorizing S-matrices. Consequently, establishing a quantum energy inequality (QEI) for this particular model is tantamount to demonstrating a QEI for an interacting quantum field theory. This result has the potential to serve as a basis for extending QEIs to more general frameworks. In particular, a QEI has already been established for an interacting case, namely for the massive Ising model in \cite{Boss}.

The quantum energy inequality that we examine is inspired by the approach in \cite{Few98}, where the authors establish a quantum energy inequality for a scalar field in commutative Minkowski spacetime. In their work, they impose constraints on the negative energy densities that can arise in quantum field theory, providing essential bounds to ensure consistency with fundamental physical principles. Their approach involves weighted averages of the energy density over specific regions, demonstrating that negative energy densities cannot persist indefinitely or accumulate arbitrarily.

The proof of the quantum energy inequality for a non-commutative quantum fields proceeds as follows. We begin by defining the operator \( X^{\pm}_{\omega} \):  
\begin{align} \boxed{
    X^{\pm}_{\omega} = \int \dd[3]{\bm{k}} \, p(\bm{k}) \left( g(\omega-\omega_{\bm{k}}) \,e^{ \frac{i}{2}k\Theta P} a (\bm{k})  \pm
    g(\omega+\omega_{\bm{k}})\, e^{- \frac{i}{2}k\Theta P}  a^*(\bm{k}) \right) }
\end{align}  
with a function $g$ that is smooth, real-valued and even, decays rapidly at infinity and where $p(\bm{k})$ is  a real valued function on $\mathbb{R}$ growing no faster than polynomial.
Next, we consider the deformed product of this operator with its adjoint,  $X^{\pm*}_{\omega} \times_{\theta} X^{\pm}_{\omega}$. In general, this expression is not    a positive operator. However, by applying the Waldmann positivity map \cite{waldmannpos}, denoted by \( S_{\theta} \),     the deformed product is mapped to a positive element of the algebra, and by integrating this expression over the variable \( \omega \) we obtain  a positive operator:  
\begin{align*} 
   \int_{0}^{\infty} \dd{\omega} \, S_{\theta} ( X^{\pm *}_{\omega} \times_{\theta} X^{\pm}_{\omega}) \geq 0. 
\end{align*}  
Positivity holds as well for the Waldmann map applied to the undeformed  product: 
\begin{align*} 
   \int_{0}^{\infty} \dd{\omega} \, S_{\theta} (X^{\pm*}_{\omega}   X^{\pm}_{\omega}) \geq 0. 
\end{align*}  
By forming  a linear combination of those two operators we obtain an inequality from which we extract a deformed operator along with an additional integral term, that is a $c$-number:  
\begin{align*}\boxed{ 
 \int_{0}^{\infty}  \frac{\dd{\omega}}{2}  \,\Bigr( S_{\theta} (X^{\pm*}_{\omega} \times_{\theta} X^{\pm}_{\omega}) + S_{\theta} (X^{\pm*}_{\omega}  X^{\pm}_{\omega}) \Bigl)
      = W_{\Theta}^{\pm} + \int_{0}^{\infty}\dd{\omega}\int\dd[3]{\bm{k}} \, 2\omega_{\bm{k}}\,p(\bm{k})^2  g(\omega+\omega_{\bm{k}})^2 \, \geq 0.}
\end{align*}  
Finally, identifying the operator \( W_{\Theta}^{\pm} \) with the smeared, normal ordered and  deformed energy density $\ord{{T_{00}^\Theta}}$ we complete the proof of the quantum energy inequality. The smeared, normal ordered and  deformed energy density on the other hand is obtained by replacing the fields on a commutative Minkowski spacetime with those quantum fields living in  a non-commutative spacetime, compare with  \cite[Equation (54)]{NCEMT2000}  and see as well \cite{NCEMT2001,  NCEMT2003, NCEMT2015, NCEMT2018} 
\begin{align}\boxed{
  \ord{{T_{00}^\Theta}}=\frac{1}{2}\ord{\bigl\{ 
  \left(
  \partial_0\phi_{\Theta}\right)^2+\sum_{i=1}^{3}(\partial_i\phi_{\Theta})^2+m^2(\phi_{\Theta})^{  2}
  \bigr\}}.}
\end{align}

 \section{The Non-commutative Energy Density}
 In this section we define, as in  \cite[Equation (54)]{NCEMT2000}\footnote{The authors call this the algebraic expression, which is equivalent to our expression if one identifies the fields on the tensor product space $\mathcal{V} \otimes \mathscr{F}_{s}(\mathscr{H})$ with the fields on $\mathscr{F}_{s}(\mathscr{H})$ as was done in \cite{GL1} and in \cite[Equation 1.3]{GL2}. } the deformed energy density by replacing the fields $\phi$ by their non-commutative counterparts $\phi_{\Theta}$. In order to derive a quantum energy inequality we first define the Wick-ordering prescription for the non-commutative creations and annihilation operators that are defined by
\begin{align}\label{eqdefaop}
a_{\Theta}(\bm{k}): =     e^{\frac{i}{2}k\Theta P} a (\bm{k}),
\qquad \qquad 
a^*_{\Theta}(\bm{k}): =
e^{-\frac{i}{2}k\Theta P}  a^*(\bm{k}) ,
\end{align} 
where $k\Theta k'=k_{\mu}\Theta^{\mu\nu}k'_{\nu}$ and the twisted CCR algebra   for arbitrary on-shell momenta\footnote{$\mathscr{H}^{+}_{m}$ denotes the upper mass shell.}   $k, k'\in	\mathscr{H}^{+}_{m}$ is given by
	\begin{align}\label{eqtccr}
	    a_{\Theta}(\bm{k})a_{\Theta}(\bm{k}')&=e^{-ik\Theta k'} a_{\Theta}(\bm{k}') a_{\Theta}(\bm{k})\\\label{eqtccr1}
 a^*_{\Theta}(\bm{k}) a^*_{\Theta}(\bm{k}')&=e^{-ik\Theta k'} a^*_{\Theta}(\bm{k}') a^*_{\Theta}(\bm{k})
 \\\label{eqtccr2}
 a_{\Theta}(\bm{k}) a^*_{\Theta}(\bm{k}')&= e^{ik\Theta k'}a^*_{\Theta }(\bm{k}') a_{\Theta}(\bm{k})+2\omega_{\mathbf{k}}\,\delta^3(\bm{k}-\bm{k}') .
 \end{align}
The Wick-ordering prescription is given by 
\begin{align*}
    \ord{\phi_{\Theta}(x)\phi_{\Theta}(y)}=\phi_{\Theta}(x)\phi_{\Theta}(y)-\langle \Omega\vert \phi_{\Theta}(x)\phi_{\Theta}(y)\Omega\rangle,
\end{align*}
 where $\vert\Omega\rangle$ is the vacuum-vector. See Appendix \ref{appnorm} for a   proof.  It is easily seen, due to the invariance of the vacuum under application of the momentum operators (and any function thereof), that the vacuum expectation value of the square of two deformed scalar fields is equal to the undeformed two-point function. Hence, we can write the Wick-ordering prescription as
 \begin{align}\boxed{
    \ord{\phi_{\Theta}(x)\phi_{\Theta}(y)}=\phi_{\Theta}(x)\phi_{\Theta}(y)-\langle \Omega\vert\phi(x)\phi(y) \Omega\rangle.} 
\end{align}
This translates  explicitly for the term $\ord{a_{\Theta}(\bm{k})a^*_{\Theta}(\bm{k}')}$  as follows (see Appendix \ref{appnorm}),
 \begin{align}\boxed{
   \ord{a_{\Theta}(\bm{k})a^*_{\Theta}(\bm{k}')}= e^{ik\Theta k'}a^*_{\Theta}(\bm{k}')a_{\Theta}(\bm{k}).}
\end{align}
Using this non-commutative Wick-ordering prescription the renormalized non-commutative energy density then reads
\begin{align} \label{eq:rdemt} 
	\ord{T_{f_{\theta}}^\Theta}&= \int_{-\infty}^\infty \dd{t}\int \dd[3]{\bm{x}} \ord{T_{00}^\Theta (t, \bm{x})} f_{\theta}(t,\bm{x}) \\
	&= \frac{1}{2} \iint \frac{\dd[3]{\bm{k}} \dd[3]{\bm{k}'}}{ 4 {\omega_{\bm{k}} \omega_{\bm{k}'}}} \Bigl[(\omega_{\bm{k}} \omega_{\bm{k}'} + \bm{k} \cdot \bm{k}') \nonumber \\
	&\hspace{40pt} \Bigl(\e^{\i k \Theta k'} a_\Theta^* (\bm{k}') a_\Theta (\bm{k}) \widehat{f_{\theta} }(\omega_{\bm{k}} - \omega_{\bm{k}'}, {\bm{k}} - {\bm{k}'}) + a_\Theta^* (\bm{k}) a_\Theta (\bm{k}') {\widehat{f_{\theta} }(\omega_{\bm{k}} - \omega_{\bm{k}'}, {\bm{k}} -  {\bm{k}'}) }\nonumber \\
	&\hspace{60pt} - a_\Theta (\bm{k}) a_\Theta (\bm{k}') \widehat{f_{\theta} }(\omega_{\bm{k}} + \omega_{\bm{k}'}, {\bm{k}} +  {\bm{k}'}) - a_\Theta^* (\bm{k}) a_\Theta^* (\bm{k}') {\widehat{f_{\theta} }(\omega_{\bm{k}} + \omega_{\bm{k}'},{\bm{k}} + {\bm{k}'})} \Bigr) \nonumber \\
	&\quad + m^2 \Bigl(\e^{\i k \Theta k'} a_\Theta^* (\bm{k}') a_\Theta (\bm{k}) \widehat{f_{\theta} }(\omega_{\bm{k}} - \omega_{\bm{k}'}, {\bm{k}} - {\bm{k}'})+ a_\Theta^* (\bm{k}) a_\Theta (\bm{k}') {\widehat{f_{\theta} }(\omega_{\bm{k}} - \omega_{\bm{k}'}, {\bm{k}} -  {\bm{k}'}) }\nonumber \\
	&\hspace{60pt} + a_\Theta (\bm{k}) a_\Theta (\bm{k}') \widehat{f_{\theta} }  (\omega_{\bm{k}} + \omega_{\bm{k}'}, {\bm{k}} +  {\bm{k}'}) + a_\Theta^* (\bm{k}) a_\Theta^* (\bm{k}') {\widehat{f_{\theta} }(\omega_{\bm{k}} + \omega_{\bm{k}'}, {\bm{k}} +  {\bm{k}'}) }\Bigr) \Bigr].  \nonumber 
\end{align}
where have smeared it with   a smooth, even, and non-negative function   $f_{\theta}(t,\bm{x})$ on $\bb{R}^4$ which decays rapidly at infinity (and is possibly compactly supported), and where the  Fourier transform $\widehat{f_{\theta}}$ of the function $f_{\theta}$ is given by the (real) function 
\begin{align}
    \widehat{f_{\theta}}(\omega_{\bm{k}},\bm{k}) =\int_{-\infty}^{\infty}dt\int d^3\bm{x}\,e^{-i\omega_{\bm{k}}t+i\bm{k}\bm{x}} f_{\theta}(t,\bm{x}).
\end{align}

Without loss of generality we  choose the test function $f_{\theta}(t,\bm{x})$ as  a product of two test functions
\begin{align}\label{eqffh}
    f_{\theta}(t,\bm{x})= f_{\theta}(t){h}_{\theta}( \bm{x})  ,
\end{align}
where we choose the spatial test function ${h}_{\theta}$ to be the delta function in the limit $\theta\rightarrow0$, i.e.\
\begin{align}\label{eqhtheta1}
  {h}_{\theta}( \bm{x}) =\frac{e^{-\frac{1}{\theta}\Vert\bm{x}\Vert^2}}{\sqrt{\theta}}.
\end{align}
\section{The Waldmann Positivity Map and a Positive Operator}
In our proof of the quantum energy inequality, we employ the Moyal star (or Rieffel) product to generate essential non-commutative terms in the operator algebra. However, the inherent limitation arises because the star product of an operator  and its adjoint fails to preserve positivity in the deformed algebra. To resolve this, we introduce a positive-restoring map $S_{\theta}$, which acts as a morphism between the star-product algebra and a space of operators equipped with a positive-definite   product structure. This mapping ensures the physical viability of energy density expectations while retaining critical non-commutative features. In Ref.~\cite{waldmannpos}, the authors construct explicitly a map that takes the star product of two elements in the (star) algebra $\mathcal{A}$, specifically $X$ and $X^*$ with $X^*, X\in \mathcal{A}$  and associates it with positive elements of the same algebra.  To ensure clarity and ease its application, we provide a concise summary of this construction below. For a more detailed exposition, we refer the reader to the original work. First, let $g:V\times V\to \mathbb{R}$ be a positive inner product on $V$. Then,  define the linear  operator $S_{\theta}:\mathcal{A}\to\mathcal{A}$ for $X\in \mathcal{A}$ as follows, 
\begin{equation}\boxed{
     S_{ {\theta} }(X)=\pi^{-2}\int_V \dd[4]{u}\,e^{-g(u,u)}\alpha_{\sqrt{\theta}u}(X).}
\end{equation}
Using the operator $S_{\theta}$ and assuming that $g$ is a compatible\footnote{Compatibility refers with regards to the symplectic structure $\omega$ which is the inverse of the Poisson tensor $\Theta$, i.e.\ $g(X,Y)=\omega(X,JY)$, where $J$ is a complex structure.} positive definite form, it follows by \cite[Theorem 3.3]{waldmannpos} that the deformed product of $X^*$ and $X$ given by
\begin{align} 
    X^*\times_{\theta}X= (2\pi)^{-4} \iint_{V\times V} \dd[4]{v} \dd[4]{z}\,\alpha_{\Theta v}(X^*)\alpha_{z}(X )\,e^{-i vz} , 
\end{align}
lies in the set of positive elements of the algebra, denoted by $\mathcal{A}^{+}$, i.e.\
\begin{align*}
      S_{ {\theta}}(X^*\times_{\theta}X)\in \mathcal{A}^{+}.
\end{align*} 
Next, we define the  operator 
\begin{align*}
  \boxed{ X^{\pm}_{\omega}= \int \dd[3]{\bm{k}} \,p(\bm{k})\left(  g(\omega-\omega_{\bm{k}})a_{\Theta} (\bm{k} )\pm
    g(\omega+\omega_{\bm{k}})a_{\Theta}^*(\bm{k} )\right)}
\end{align*}
 and we use the linear operator $S_{\theta}$, while choosing the positive definite metric to be $g=\delta$ the Kronecker delta and the vector space $V=\R^4$, to define the following manifestly positive operator 
    \begin{align}\label{eqs1}
    \int_{0}^{\infty}\dd{\omega} \,    S_{\theta}\left(    X^{\pm*}_{\omega} \times_{\theta}X^{\pm}_{\omega} 
     \right) .
\end{align}
\begin{lemma}\label{lemm1}
    The positive operator given in Equation \eqref{eqs1} is explicitly given by 
    \begin{align*}
    \int_{0}^{\infty}\dd{\omega} \,    S_{\theta}\left(     X^{\pm*}_{\omega} \times_{\theta}X^{\pm}_{\omega} 
     \right)    & =
 W^{1,\pm}_{\Theta}+   \int_{0}^{\infty}\dd{\omega} \int \dd[3]{\bm{k}} \, 2\omega_{\bm{k}}\, p(\bm{k})^2
g(\omega + \omega_{\bm{k}}) ^2  ,\, 
\end{align*}
where  the operator $W^{1,\pm}_{\Theta}$ is defined by
\begin{align}\label{eqspmW1}
     W^{1,\pm}_{\Theta}&:= \iint \dd[3]{\bm{k}}'\dd[3]{\bm{k}} \,  p(\bm{k})p(\bm{k}')\\
	& \nonumber
\quad \Bigl(\Bigl(e^{-\frac{\theta}{4}\Vert k'-k \Vert^2}\, \int_{0}^{\infty} \dd{\omega}\, g(\omega-\omega_{\bm{k}'}) g(\omega-\omega_{\bm{k}})\Bigr)\, e^{ik\Theta  k'}
  a^*_{\Theta} (\bm{k}' )a_{\Theta} (\bm{k}  )  \\&\nonumber
  \quad +\Bigl( e^{-\frac{\theta}{4}\Vert k'-k \Vert^2}\int_{0}^{\infty} \dd{\omega}\,  g(\omega+\omega_{\bm{k}'})  g(\omega+\omega_{\bm{k}})\Bigr)\, 
 a^*_{\Theta}(\bm{k}) a_{\Theta}(\bm{k}' )  
\\& \nonumber \quad \pm \Bigl(e^{-\frac{\theta}{4}\Vert k'+k \Vert^2}\,\int_{0}^{\infty} \dd{\omega}\,  g(\omega+\omega_{\bm{k}'})\, g(\omega-\omega_{\bm{k}})  \Bigr)\, a_{\Theta}(\bm{k} ) a_{\Theta} (\bm{k}' ) \\& \quad \pm\Bigl(e^{-\frac{\theta}{4}\Vert k'+k \Vert^2}\,\int_{0}^{\infty} \dd{\omega}\, g(\omega-\omega_{\bm{k}'})  g(\omega+\omega_{\bm{k}}) \Bigr)\,a^*_{\Theta} (\bm{k} )a_{\Theta}^*(\bm{k}'  )  
  \Bigr).\nonumber
\end{align}
\end{lemma}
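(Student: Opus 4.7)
The plan is to expand the Rieffel product $X^{\pm*}_\omega\times_\theta X^{\pm}_\omega$ as a double integral over $\bm{k},\bm{k}'$ of four operator bilinears, corresponding to the four pairings $a^*_\Theta\times_\theta a_\Theta$, $a^*_\Theta\times_\theta a^*_\Theta$, $a_\Theta\times_\theta a_\Theta$ and $a_\Theta\times_\theta a^*_\Theta$, each with its respective product of $p$'s, its product of $g(\omega\pm\omega_{\bm{k}})$, $g(\omega\pm\omega_{\bm{k}'})$, and a sign $\pm$ coming from the definition of $X^{\pm}_\omega$. Each piece is then Wick-ordered, passed through $S_\theta$, and finally integrated over $\omega$.

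For each individual Rieffel product I would exploit the fact that $\alpha_x(a_\Theta(\bm{k}))=e^{-ik\cdot x}a_\Theta(\bm{k})$ and $\alpha_x(a^*_\Theta(\bm{k}))=e^{ik\cdot x}a^*_\Theta(\bm{k})$, which follows because the deformation prefactors $e^{\pm\frac{i}{2}k\Theta P}$ commute with the translations that define $\alpha$. The defining integral $(2\pi)^{-4}\iint \dd[4]{v}\dd[4]{z}\,\alpha_{\Theta v}(\,\cdot\,)\,\alpha_z(\,\cdot\,)\,e^{-ivz}$ then collapses to a $c$-number oscillatory integral whose $z$-integration enforces a delta pinning $v$, leaving only a phase $e^{\pm ik\Theta k'}$ multiplying the undeformed operator product. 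I would next invoke the twisted CCR \eqref{eqtccr}--\eqref{eqtccr2} to bring each bilinear into creation-left/annihilation-right order; only the case $a_\Theta(\bm{k})\times_\theta a^*_\Theta(\bm{k}')$ is nontrivial, and there the Rieffel phase exactly cancels the twist from the CCR, leaving the Wick-ordered bilinear together with the $c$-number $2\omega_{\bm{k}}\,\delta^3(\bm{k}-\bm{k}')$; antisymmetry of $\Theta$ makes any residual phase trivial on the support of the delta.

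Applying $S_\theta$ next, each normal-ordered bilinear is still an eigenvector of $\alpha_{\sqrt{\theta}u}$ with eigenvalue $e^{i\sqrt{\theta}(\pm k\pm k')\cdot u}$, so the defining Gaussian $\pi^{-2}\int\dd[4]{u}\,e^{-\|u\|^2}e^{i\sqrt{\theta}(\pm k\pm k')\cdot u}$ evaluates to the damping factor $e^{-\theta\|k\pm k'\|^2/4}$, which is exactly what produces the four kernels of \eqref{eqspmW1}; the $c$-number survives untouched because $\alpha$ is unital. A final relabelling $\bm{k}\leftrightarrow\bm{k}'$ in the $a^*_\Theta a_\Theta$ contribution, together with $k'\Theta k=-k\Theta k'$, recasts the accompanying phase as $e^{ik\Theta k'}$ matching $W^{1,\pm}_\Theta$, and the $\omega$-integration then delivers simultaneously the operator $W^{1,\pm}_\Theta$ and the $c$-number $\int_0^\infty\dd{\omega}\int\dd[3]{\bm{k}}\,2\omega_{\bm{k}}\,p(\bm{k})^2 g(\omega+\omega_{\bm{k}})^2$ in the stated form.

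The main obstacle is purely a bookkeeping one: three independent sources of phase (the two $\alpha$'s inside the Rieffel product, the twisted-CCR reordering, and $S_\theta$) must be assembled correctly across all four contributions so that the signs, the arguments $\|k\pm k'\|^2$ of the Gaussians, and the exponent $e^{ik\Theta k'}$ come out exactly as in $W^{1,\pm}_\Theta$. None of the individual steps is hard in isolation, but the sign conventions and the symmetrization via $\bm{k}\leftrightarrow\bm{k}'$ must be tracked with care.
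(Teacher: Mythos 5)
Your proposal is correct and follows essentially the same route as the paper's own proof: expand the Rieffel product into the four bilinears, evaluate each $\times_\theta$ via the oscillatory integral (which collapses to a phase $e^{\pm ik\Theta k'}$), reorder with the twisted CCR so that only the $a_\Theta\times_\theta a^*_\Theta$ term yields the $c$-number $2\omega_{\bm{k}}\delta^3(\bm{k}-\bm{k}')$, and then apply $S_\theta$ as a Gaussian average producing the factors $e^{-\frac{\theta}{4}\Vert k'\mp k\Vert^2}$ before integrating over $\omega$. All the phase cancellations and bookkeeping you describe match the term-by-term computations the paper carries out in its appendix.
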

\begin{proof}
See Appendix \ref{proflemm1}.
\end{proof}$\,$\newline Next, we consider an operator of the form
  \begin{align*}
    \int_{0}^{\infty}\dd{\omega} \,    S_{\theta}\left(    X^{\pm*}_{\omega}  X^{\pm}_{\omega} 
     \right) .
\end{align*}
In what follows we prove that this  operator is positive. 
\begin{lemma}\label{lemm2a}
    The operator  $S_{\theta}$ acting on a positive operator $X^{*}X$ is a positive operator, i.e.\
    \begin{align*}
        S_{\theta}\left(    X^{ *} X      \right) \geq 0.
    \end{align*}
\end{lemma}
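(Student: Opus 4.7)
The plan is to reduce the positivity of $S_\theta(X^* X)$ to pointwise positivity of the integrand in the defining Gaussian integral, by exploiting the fact that the translation-type action $\alpha$ appearing in the Waldmann map is a $*$-automorphism.

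First I would unfold the definition
\begin{align*}
 S_\theta(X^* X) = \pi^{-2} \int_V d^4 u \, e^{-g(u,u)} \, \alpha_{\sqrt{\theta}\, u}(X^* X).
\end{align*}
The key input is that $\alpha_v$ (the action implemented in the Waldmann construction, which for our setting is the spacetime translation / momentum-operator automorphism used to build warped convolutions) is a $*$-automorphism of $\mathcal{A}$. Therefore $\alpha_{\sqrt{\theta}\,u}(X^* X) = \alpha_{\sqrt{\theta}\,u}(X^*)\,\alpha_{\sqrt{\theta}\,u}(X) = (\alpha_{\sqrt{\theta}\,u}(X))^{*}\,\alpha_{\sqrt{\theta}\,u}(X)$, which is manifestly a positive operator for every fixed $u \in V$.

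Next I would combine this with the fact that the weight $e^{-g(u,u)}$ is strictly positive (since $g$ is positive definite) and the normalization $\pi^{-2}$ is positive. Thus for every vector $\psi$ in the dense invariant domain on which the $\alpha_{\sqrt{\theta}\,u}(X)$ are defined, the scalar integrand
\begin{align*}
 \pi^{-2} e^{-g(u,u)} \, \langle \psi, (\alpha_{\sqrt{\theta}\,u}(X))^{*}\,\alpha_{\sqrt{\theta}\,u}(X)\, \psi\rangle = \pi^{-2} e^{-g(u,u)} \, \| \alpha_{\sqrt{\theta}\,u}(X)\,\psi \|^{2}
\end{align*}
is non-negative. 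Integrating a non-negative function over $V = \mathbb{R}^4$ preserves non-negativity, so $\langle \psi, S_\theta(X^* X)\,\psi\rangle \geq 0$ for all $\psi$ in the domain, which is the desired operator positivity.

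The only delicate point, and in my view the main obstacle, is handling the integral rigorously: one must justify that $S_\theta(X^* X)$ is well defined as an operator (e.g.\ in the weak or strong sense on a suitable invariant domain), that Fubini-type interchanges between the $V$-integration and the quadratic-form evaluation are legitimate, and that the Gaussian damping $e^{-g(u,u)}$ tames any polynomial growth of $\| \alpha_{\sqrt{\theta}\,u}(X)\,\psi \|^{2}$ in $u$. For the operators $X$ of interest here (finite sums/integrals of the creation/annihilation operators $a_\Theta, a^*_\Theta$ smeared with Schwartz-class functions, acting on finite-particle vectors in $\mathscr{F}_s(\mathscr{H})$), the action $\alpha_v$ at most multiplies symbols by unimodular phases, so $\| \alpha_{\sqrt{\theta}\,u}(X)\,\psi\|$ is bounded uniformly in $u$, and the Gaussian ensures absolute convergence. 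With this integrability in hand, the positivity argument above goes through without further issue.
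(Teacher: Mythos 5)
Your argument is essentially identical to the paper's proof: both unfold the Gaussian integral defining $S_{\theta}$, use that $\alpha_{\sqrt{\theta}u}$ is a $*$-automorphism (implemented by unitaries in the paper) to write the integrand as $(\alpha_{\sqrt{\theta}u}(X))^{*}\alpha_{\sqrt{\theta}u}(X)$, and conclude positivity of the quadratic form by integrating the non-negative function $e^{-\Vert u\Vert^2}\Vert\alpha_{\sqrt{\theta}u}(X)\Psi\Vert^2$. Your additional remarks on domains and absolute convergence go slightly beyond what the paper records, but the core argument is the same.
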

\begin{proof}
For the proof we write the explicit form of the Waldmann operator acting on the positive operator $X^{ *} X$ 
\begin{align*}
      S_{\theta}\left(    X^{ *} X      \right)  &=  \pi^{-2} \int  \dd[4]{u}\,e^{-\Vert u\Vert^2}\alpha_{\sqrt{\theta}u}( X^{ *} X   )\, \,  \\&=  \pi^{-2} \int   \dd[4]{u}\,e^{-\Vert u\Vert^2} U(\sqrt{\theta}u)( X^{ *} X   )U(-\sqrt{\theta}u)\, \,    \\&=  \pi^{-2} \int   \dd[4]{u}\,e^{-\Vert u\Vert^2} U(\sqrt{\theta}u)\, X^{ *}\,U(-\sqrt{\theta}u)\,U(\sqrt{\theta}u) X   \,U(-\sqrt{\theta}u)\, \,  \\&=  \pi^{-2} \int   \dd[4]{u}\,e^{-\Vert u\Vert^2} \alpha_{\sqrt{\theta}u}(X^{ *})\,\alpha_{\sqrt{\theta}u} (X)    \, \,  \\&=  \pi^{-2} \int   \dd[4]{u}\,e^{-\Vert u\Vert^2} (\alpha_{\sqrt{\theta}u}(X))^{ *}\,\alpha_{\sqrt{\theta}u} (X)    \, \, 
\end{align*}
where in the last lines we use the fact that we  are dealing with actions of $\R^4$ that can be represented by a unitary operator $U(u)U(v)=U(u+v)$ and $U(0)=1$. Using the definition $X_u:=\alpha_{\sqrt{\theta}u}(X)$ we write 
\begin{align*}
   \langle\Psi,  S_{\theta}\left(    X^{ *} X    \right)  \Psi\rangle &= \pi^{-2} \int   \dd[4]{u}\,e^{-\Vert u\Vert^2}  \langle\Psi,  (\alpha_{\sqrt{\theta}u}(X))^{ *}\,\alpha_{\sqrt{\theta}u} (X)    \Psi\rangle  
   \\&=  \pi^{-2} \int \dd[4]{u}\,e^{-\Vert u\Vert^2}  \langle\Psi, X_u^{ *}\,X_u \Psi\rangle \, \geq0.
\end{align*}
The resulting integral is positive because both the Gaussian function and the expectation value of $X_u^{ *}\,X_u$ are positive over the entire real line. Since the product of two positive functions is positive, and the integral of a positive function over a domain with positive measure is positive, it follows that the integral is positive. In this context see as well \cite[Theorem 3.3 (1)]{waldmannpos}.
\end{proof}
\begin{lemma}\label{lemm2}
    The explicit result of the Waldmann operator applied to the product of operators $X^{\pm*}_{\omega}  X^{\pm}_{\omega}$  is
\begin{align*}
     &   \int_0^{\infty} \,\dd{\omega}\,S_{\theta}((X^{\pm}_\omega)^*X^{\pm}_\omega)
         =   W^{2,\pm}_{\Theta}+
\int_0^\infty \dd{\omega} \int  {\dd[3]{\bm{k}}}\,2 \omega_{\bm{k}}\, p(\bm{k})^2
g(\omega + \omega_{\bm{k}}) ^2  , \,  
\end{align*}
    where the operator $W^{2,\pm}_{\Theta}$ is explicitly given by 
\begin{align}\label{eqspm2}
     W^{2,\pm}_{\Theta}&:= \iint \dd[3]{\bm{k}}\dd[3]{\bm{k}'} \,  p(\bm{k})p(\bm{k}')\\
	& \nonumber
\quad \Bigl(\Bigl(e^{-\frac{\theta}{4}\Vert k'-k \Vert^2}\, \int_{0}^{\infty} \dd{\omega}\, g(\omega+\omega_{\bm{k}}) g(\omega+\omega_{\bm{k}'})\Bigr)\, e^{ik\Theta  k'}
  a^*_{\Theta} (\bm{k}' )a_{\Theta} (\bm{k}  )  \\&\nonumber
  \quad +\Bigl( e^{-\frac{\theta}{4}\Vert k'-k \Vert^2}\int_{0}^{\infty} \dd{\omega}\,  g(\omega-\omega_{\bm{k}})  g(\omega-\omega_{\bm{k}'})\Bigr)\, 
 a^*_{\Theta}(\bm{k}) a_{\Theta}(\bm{k}' )  
\\& \nonumber \quad \pm \Bigl(e^{-\frac{\theta}{4}\Vert k'+k \Vert^2}\,\int_{0}^{\infty} \dd{\omega}\,  g(\omega+\omega_{\bm{k}})\, g(\omega-\omega_{\bm{k}'})  \Bigr)\, a_{\Theta}(\bm{k} ) a_{\Theta} (\bm{k}' ) \\& \quad \pm\Bigl(e^{-\frac{\theta}{4}\Vert k'+k \Vert^2}\,\int_{0}^{\infty} \dd{\omega}\, g(\omega-\omega_{\bm{k}})  g(\omega+\omega_{\bm{k}'}) \Bigr)\, a^*_{\Theta} (\bm{k} )a_{\Theta}^*(\bm{k}'  )  
  \Bigr).\nonumber
\end{align}
\end{lemma}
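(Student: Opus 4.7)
The plan is to follow the strategy used for Lemma~\ref{lemm1}, but specialized to the undeformed product $X^{\pm*}_\omega X^{\pm}_\omega$, which avoids the auxiliary $v,z$ integration inherent in $\times_\theta$. First I would expand $X^{\pm*}_\omega X^{\pm}_\omega$ directly, using the definition of $X^{\pm}_\omega$ together with the reality of $p$ and $g$. This yields a double momentum integral over the four bilinears $a^*_\Theta(\bm{k})a_\Theta(\bm{k}')$, $a^*_\Theta(\bm{k})a^*_\Theta(\bm{k}')$, $a_\Theta(\bm{k})a_\Theta(\bm{k}')$, and $a_\Theta(\bm{k})a^*_\Theta(\bm{k}')$, with weights determined by products $g(\omega\pm\omega_{\bm{k}})g(\omega\pm\omega_{\bm{k}'})$ and relative signs fixed by the $\pm$ choice.

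Next I would reduce the sole non-normal-ordered bilinear $a_\Theta(\bm{k})a^*_\Theta(\bm{k}')$ via the twisted CCR~\eqref{eqtccr2}, producing $e^{ik\Theta k'}a^*_\Theta(\bm{k}')a_\Theta(\bm{k})$ together with the c-number $2\omega_{\bm{k}}\delta^3(\bm{k}-\bm{k}')$. The phase $e^{ik\Theta k'}$ and the weight $g(\omega+\omega_{\bm{k}})g(\omega+\omega_{\bm{k}'})$ inherited from the $aa^*$ monomial account exactly for the first line of~\eqref{eqspm2}, while integrating the delta against one momentum leaves the scalar integrand $2\omega_{\bm{k}}p(\bm{k})^2 g(\omega+\omega_{\bm{k}})^2$ displayed in the statement.

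Then I would apply $S_\theta$ term-by-term. Since $\alpha_{\sqrt{\theta}u}$ acts on $a_\Theta(\bm{k})$ and $a^*_\Theta(\bm{k})$ by phases $e^{\pm i\sqrt{\theta}k\cdot u}$, each bilinear acquires a factor $e^{i\sqrt{\theta}q\cdot u}$ with $q\in\{\pm(k'-k),\pm(k+k')\}$, and the elementary Gaussian identity
\[
\pi^{-2}\int_{\R^4}d^4u\,e^{-\Vert u\Vert^2+i\sqrt{\theta}\,q\cdot u}=e^{-\tfrac{\theta}{4}\Vert q\Vert^2}
\]
(with the evenness of the Gaussian absorbing the sign of $q$) produces the four damping factors $e^{-\tfrac{\theta}{4}\Vert k'\mp k\Vert^2}$ appearing in~\eqref{eqspm2}. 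The c-number contribution commutes with $\alpha$ and hence passes through $S_\theta$ unchanged. Finally, integrating over $\omega\in[0,\infty)$ and collecting the four operator contributions reproduces $W^{2,\pm}_\Theta$, while the delta-function contribution yields the advertised scalar integral.

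The main obstacle is purely bookkeeping: faithfully tracking the $\pm$ signs through normal ordering, pairing each $g(\omega\pm\omega_{\bm{k}})$ factor with the momentum it inherits from, and verifying that the twist $e^{ik\Theta k'}$ from~\eqref{eqtccr2} attaches to exactly the one $a^*_\Theta(\bm{k}')a_\Theta(\bm{k})$ monomial shown in~\eqref{eqspm2} and to no other. A secondary technical concern is the interchange of the $u$-integral with the action on products of unbounded operators; I would justify this on a suitable dense invariant domain, just as in the proof of Lemma~\ref{lemm2a}.
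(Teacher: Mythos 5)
Your proposal is correct and follows essentially the same route as the paper's proof in Appendix \ref{proflemm2}: expand $(X^{\pm}_\omega)^*X^{\pm}_\omega$ directly, normal-order the single $a_\Theta a^*_\Theta$ monomial via the twisted CCR \eqref{eqtccr2} to extract the phase $e^{ik\Theta k'}$ and the $c$-number term, then apply $S_\theta$ term by term using the Gaussian damping factors computed in Appendix \ref{app1}, and integrate over $\omega$. Your additional remarks on the Gaussian identity and on domain issues for the $u$-integral go slightly beyond what the paper writes out, but do not change the argument.
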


\begin{proof}
    See Appendix \ref{proflemm2}.
\end{proof}$\,$\newline
Finally, combining the two positive operators constructed in the previous lemmas through a linear combination, we obtain an operator that matches the smeared, normal-ordered energy density defined by:
\begin{align}\label{eqspmW}\boxed{
  W_{\Theta}^{\pm}=\frac{1}{2} \int_{0}^{\infty}\dd{\omega} \,    S_{\theta}\left(    X^{\pm*}_{\omega} \times_{\theta}X^{\pm}_{\omega} 
     \right) +   \frac{1}{2} \int_{0}^{\infty}\dd{\omega} \,    S_{\theta}\left(    X^{\pm*}_{\omega}  X^{\pm}_{\omega} 
     \right).   }
\end{align}When the deformation parameter 
$\theta$ is set to zero and the second term is symmetrized in the momentum variables $\bm{k}$ and $\bm{k}'$ the operator reduces to the form studied in \cite{Few98}. 
\begin{proposition} \label{lemm3}
  The operator $W_{\Theta}^{\pm}$ given as a  linear combination in Equation \eqref{eqspmW} is explicitly given by 
  \begin{align*} W_{\Theta}^{\pm}&=\frac{1}{2}\iint \dd[3]{\bm{k}}'\dd[3]{\bm{k}} \,  p(\bm{k})p(\bm{k}') \Bigl(
F_{\theta}(\bm{k},\bm{k}')\, \Bigl( e^{ik\Theta  k'}
  a^*_{\Theta} (\bm{k}' )a_{\Theta} (\bm{k}  ) +
 a^*_{\Theta}(\bm{k}) \,a_{\Theta}(\bm{k}' )  \Bigr) 
  \\& \quad \pm  {G}_{\theta}(\bm{k},\bm{k}')\, \Bigl(a_{\Theta}(\bm{k} ) a_{\Theta} (\bm{k}' )
    +a^*_{\Theta} (\bm{k} )a_{\Theta}^*(\bm{k}'  )\Bigr) \Bigr) +
\int_0^\infty \dd{\omega} \int  {\dd[3]{\bm{k}}} \,2\omega_{\bm{k}}\,  p(\bm{k})^2
g(\omega + \omega_{\bm{k}}) ^2  ,\, 
\end{align*}
with functions $F$ and $G$ that read
\begin{align*}
    F_{\theta}(\bm{k},\bm{k}')&= 
e^{-\frac{\theta}{4}\Vert k'-k \Vert^2}\,  \int_{0}^{\infty} \dd{\omega}\, \Bigl(g(\omega-\omega_{\bm{k}'})\, g(\omega-\omega_{\bm{k}})+g(\omega+\omega_{\bm{k}'})\, g(\omega+\omega_{\bm{k}})\Bigr), \\
 {G}_{\theta}(\bm{k},\bm{k}')&= e^{-\frac{\theta}{4}\Vert k'+k \Vert^2}\,\int_{0}^{\infty} \dd{\omega}\,  \Bigl(g(\omega+\omega_{\bm{k}'})\, g(\omega-\omega_{\bm{k}}) + g(\omega-\omega_{\bm{k}'})\, g(\omega+\omega_{\bm{k}})\Bigr).
\end{align*}

\end{proposition}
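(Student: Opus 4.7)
The proof is essentially an algebraic bookkeeping exercise once Lemmas~\ref{lemm1} and \ref{lemm2} are in hand. The plan is to substitute the explicit expressions for $\int_0^\infty\!\dd{\omega}\, S_\theta(X^{\pm *}_\omega \times_\theta X^{\pm}_\omega)$ and $\int_0^\infty\!\dd{\omega}\, S_\theta(X^{\pm *}_\omega X^{\pm}_\omega)$ from those two lemmas into the defining relation \eqref{eqspmW}, and then collect coefficients in front of each of the four operator monomials $e^{ik\Theta k'} a^*_\Theta(\bm{k}') a_\Theta(\bm{k})$, $a^*_\Theta(\bm{k}) a_\Theta(\bm{k}')$, $a_\Theta(\bm{k}) a_\Theta(\bm{k}')$, and $a^*_\Theta(\bm{k}) a^*_\Theta(\bm{k}')$.

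First, I would note that the c-number contributions appearing in the two lemmas are identical: both equal $\int_0^\infty\!\dd{\omega}\int\!\dd[3]{\bm{k}}\, 2\omega_{\bm{k}}\, p(\bm{k})^2 g(\omega+\omega_{\bm{k}})^2$. Multiplying each by $1/2$ and summing reproduces this same c-number term, which matches the trailing integral in the proposition.

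Second, for the operator pieces, the key observation is that the Gaussian factor $e^{-\frac{\theta}{4}\Vert k'-k\Vert^2}$ multiplying the $a^*a$ monomials (respectively $e^{-\frac{\theta}{4}\Vert k'+k\Vert^2}$ multiplying the $aa$ and $a^*a^*$ monomials) appears identically in both $W^{1,\pm}_\Theta$ and $W^{2,\pm}_\Theta$. Hence, in the linear combination $\tfrac{1}{2}(W^{1,\pm}_\Theta + W^{2,\pm}_\Theta)$, only the $\omega$-integrals over products of $g$-factors need to be combined. A direct inspection shows that the coefficient of both $e^{ik\Theta k'}a^*_\Theta(\bm{k}')a_\Theta(\bm{k})$ and $a^*_\Theta(\bm{k})a_\Theta(\bm{k}')$ becomes $\tfrac{1}{2}F_\theta(\bm{k},\bm{k}')$, where the integrand $g(\omega-\omega_{\bm{k}'})g(\omega-\omega_{\bm{k}}) + g(\omega+\omega_{\bm{k}'})g(\omega+\omega_{\bm{k}})$ pools one contribution from each lemma. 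Similarly, the coefficient of both $a_\Theta(\bm{k})a_\Theta(\bm{k}')$ and $a^*_\Theta(\bm{k})a^*_\Theta(\bm{k}')$ becomes $\pm\tfrac{1}{2}G_\theta(\bm{k},\bm{k}')$ after pooling the two cross-terms $g(\omega+\omega_{\bm{k}'})g(\omega-\omega_{\bm{k}}) + g(\omega-\omega_{\bm{k}'})g(\omega+\omega_{\bm{k}})$.

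There is no genuine obstacle here, since the proposition amounts to a reorganization of the two previous lemmas. The only sanity check required is that the coefficient functions $F_\theta$ and $G_\theta$ be symmetric under $\bm{k}\leftrightarrow\bm{k}'$, so that a single $F_\theta$ (resp.\ $G_\theta$) can be cleanly factored out of both associated pairs of operator monomials, exactly as written in the final expression. This symmetry is manifest from the evenness of $g$ and the $\bm{k}\leftrightarrow\bm{k}'$-invariance of the Gaussian factors, so the claimed formula follows term by term.
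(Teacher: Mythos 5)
Your proposal is correct and follows essentially the same route as the paper's own proof: substitute the explicit results of Lemmas~\ref{lemm1} and~\ref{lemm2} into the linear combination~\eqref{eqspmW}, observe that the two identical $c$-number terms combine with the factors of $\tfrac{1}{2}$ into a single copy, and collect the $\omega$-integrals multiplying each of the four operator monomials into $F_{\theta}$ and $G_{\theta}$. The only (harmless) extra remark is your symmetry check on $F_{\theta}$ and $G_{\theta}$, which the paper does not spell out but which is indeed manifest from the integrands.
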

\begin{proof}
    See Appendix \ref{proflemm3}. 
\end{proof}$\,$\newline
Analogously to \cite{Few98} we use    the fact that the function $g$ is even, to obtain 
\begin{align}\nonumber
    F_{\theta}(\bm{k},\bm{k}')&=  
e^{-\frac{\theta}{4}\Vert k'-k \Vert^2}\,  \int_{-\infty}^{\infty} \dd{\omega}\, g(\omega-\omega_{\bm{k}'}) g(\omega-\omega_{\bm{k}})  \\&= e^{-\frac{\theta}{4}\Vert k'-k \Vert^2}\,(g* g)(\omega_{\bm{k}}-\omega_{\bm{k}'}) \nonumber \\ 
&= \widehat{f_{\theta}}(\omega_{\bm{k}}-\omega_{\bm{k}'}, \bm{k}-\bm{k}'),  \label{eqftheta} 
\end{align}
where the convolution of two functions \( f \) and \( g \) is defined as:
\[
(f * g)(t) = \int_{-\infty}^{\infty}\dd{\tau} f(\tau)\, g(t-\tau).
\]
Equivalently for the function ${G}_{\theta}$ we have
\begin{align}
 {G}_{\theta}(\bm{k},\bm{k}')&= e^{-\frac{\theta}{4}\Vert k'+k \Vert^2}\,(g* g)(\omega_{\bm{k}}+\omega_{\bm{k}'}) \nonumber\\ 
 &= \widehat{f_{\theta}}(\omega_{\bm{k}}+\omega_{\bm{k}'}, \bm{k}'+\bm{k}) , \label{eqgtheta} 
 \end{align} 
 where in the last equations we identified the functions $F_{\theta}$ and ${G}_{\theta}$ with  the Fourier transformed  smearing function $\widehat{f_{\theta}}$ that appears in Equation \eqref{eq:rdemt}.

 Since we constructed the operator $W^{\pm}_{\Theta}$ as formally positive, we have the state independent inequality
\begin{align}\boxed{
   \frac{1}{2}  \left\langle W^{1,\pm}_{\Theta}+W^{2,\pm}_{\Theta} \right\rangle_{\Psi} \geq - 
\int_0^\infty \dd{\omega} \int  {\dd[3]{\bm{k}}} \, 2 \omega_{\bm{k}}\,p(\bm{k})^2
g(\omega + \omega_{\bm{k}}) ^2\,  .}\label{eqboundW}
\end{align} 

 \section{The Non-commutative QEI}
The  non-trivial commutation relation between the time and the space coordinates, prevent us physically to consider  the energy density at position $\bm{x}=\bm{0}$, and smear with a function  in time only. However, to keep it close to the original framework, we smear with a test function that behaves like the one used in \cite{Few98}, when $\Theta$ goes to zero.  
Next, we compare the operator $W^{\pm}_{\Theta}$ given in   Equation \eqref{eqspmW}  to the terms in the normal-ordered smeared energy density given explicitly in Equation \eqref{eq:rdemt}. Without loss of generality, we split the test function $\hat{f}_{\theta}$ (see Equation \eqref{eqffh}) as follows  
\begin{align*}
 \widehat{f_{\theta} }(\omega_{\bm{k}}, \bm{k})= \widehat{f_{\theta} }(\omega_{\bm{k}}) \widehat{h_{\theta}}(\bm{k}),
\end{align*}
and make the following identifications using Proposition \ref{lemm3} and Equations \eqref{eqftheta} and \eqref{eqgtheta},
\begin{align}\label{eqfhat}
     \widehat{f_{\theta} }(\omega_{\bm{k}}  \pm  \omega_{\bm{k}'}) &=e^{-\frac{\theta}{4} (\omega_{\bm{k}'} \pm \omega_{\bm{k}})^2}\, \left(g* g\right)( \omega_{\bm{k}} \pm \omega_{\bm{k}'} )\\
      \widehat{h_{\theta}} ( {\bm{k}} \pm  {\bm{k}'})&=e^{-\frac{\theta}{4}\Vert\bm{k}'\pm\bm{k }\Vert^2}.\,
\end{align}
To match the test-functions $\hat{f}_{\theta}$ and the functions appearing in the operator $W^{\pm}_{\Theta}$, we set
\begin{align}
   \widehat{f_{\theta} }(\cdot)=  e^{-\frac{\theta}{4} (\cdot)} \hat{f}(\cdot) ,
\end{align}
where $\hat{f}=g * g$ such that $\widehat{f^{1/2}}=g$. The inverse Fourier transformation of the function $\widehat{f_{\theta}}$ reads  
\begin{align}
f_{\theta}(t)&=\left(K_{\theta} * f\right)(t) 
\\&=
\int_{-\infty}^{\infty} \dd{\tau} \frac{1}{\sqrt{\theta}}e^{-\frac{(t-\tau)^2}{\theta}} f(\tau).
\end{align}
where we defined the Kernel $K_{\theta}$
\[
K_{\theta}(t) = \frac{1}{\sqrt{\theta}} \, e^{-\frac{t^2}{\theta}}.
\]
 Note that in the commutative limit \( \theta \to 0 \), the kernel \( K_{\theta}(t) \) converges to the Dirac delta distribution \( \delta(t) \), and consequently, the deformed function \( f_{\theta}(t) \) reduces to the original function \( f(t) \). \par
 Therefore the quantum inequality that we prove holds for all test-functions $f$ where $f^{1/2}$ is a test-function as well, as in the original proof in \cite{Few98}, after the convolution thereof with the Kernel $K_{\theta}$   that acts as a localized weighting function. Physically, this convolution generates a smoothed ("smeared") version of $f(t)$, with the parameter $\theta$ controlling the  scale of smoothing. Larger $\theta$ values increase the effective averaging window, suppressing high-frequency features while preserving low-frequency trends. The process is equivalent to evaluating $f(t)$ at points distributed normally around $t$ with variance $\theta/2$, effectively modeling diffusion-like spreading of information. In Fourier space, this corresponds to a low-pass filter that attenuates frequencies $k$ via the factor $e^{-\frac{\theta}{4} \Vert k\Vert^2}$. Mathematically, $f_\theta(t)$ satisfies the heat equation with initial condition $f(t)$ and diffusion time $\theta/2$, formalizing its role as a regularization operation. Beyond their physical interpretation in the context of quantum geometry, such $\theta$-dependent Gaussians frequently appear in the literature. In numerous examples involving non-commutative black holes, mass distributions are replaced by these "deformed Gaussians," which serve as natural regularization functions for point masses (see \cite{Gauss2} (and references therein), \cite{Gauss3, Gauss4, Gauss5}).  The kernel $K_{\theta}$ in space and time has also been utilized in the context of quantum field theory (QFT), where it replaces the distribution in equal-time canonical commutation relations. This approach is then applied to various problems in cosmology, see \cite[Section 5]{Gauss1}.
\par
Since the deformed, normal-ordered and smeared non-commutative energy density is a finite sum of operators $W^{1,\pm}_{\Theta}$ and $W^{2,\pm}_{\Theta}$, we apply analogously to \cite{Few98} the Bound \eqref{eqboundW} for the different cases 
\begin{align*}
    p(\bm{k})=\frac{1}{2},\qquad \qquad 
    p(\bm{k})=\frac{k_i}{ 2{ \omega_{\bm{k}}}},\qquad \qquad  p(\bm{k})=\frac{m}{2{\omega_{\bm{k}}}}, 
\end{align*}
and obtain, on summing, the commutative inequality 
\begin{align*}
   \left\langle \ord{T^\Theta_{f_\theta}} \right\rangle_\Psi &\geq - \int_{0}^{\infty}\dd{\omega} \int \dd[3]{\bm{k}} \,  \omega_{\bm{k}}\, g(\omega+\omega_{\bm{k}})^2 \\ 
   &= - \int_{0}^{\infty}\dd{\omega} \int \dd[3]{\bm{k}} \, \omega_{\bm{k}}
\widehat{f^{1/2}}(\omega+\omega_{\bm{k}})^2\\&=
- \frac{C_3}{2\pi}\int_{0}^{\infty}\dd{\omega}\int_{m}^{\infty} \dd{\omega}'\, 
\widehat{f^{1/2}}(\omega+\omega')^2\,{\omega}'^2\,\left( {\omega}'^2-m^2 \right)^{1/2},
\end{align*}
where $C_3$ is a constant (see \cite[Equation (4.5) for  $n=3$]{Few98}. This  inequality is the same inequality as in the undeformed case and it is finite in case $f^{1/2}$ is a test-function\footnote{This condition is sufficient but not necessary.}. The necessary and sufficient condition is exactly that the decay of the Fourier transform of \( f^{1/2} \) beats the polynomial weight \( \omega'^3 \). This is guaranteed by assuming that \( f \) is nonnegative and that \( f^{1/2} \) is sufficiently smooth and decays rapidly—for instance, if \( f \) belongs to the Schwartz space or, equivalently, if \( \sqrt{f} \in H^s(\mathbb{R}) \) (the Sobolev space) with \( s > 2\).

\section{Conclusion and Outlook}
In this work, we established a quantum energy inequality (QEI) for massive scalar fields in non-commutative Minkowski spacetime. The derived lower bound retains independence from the non-commutative length scale and coincides exactly with its commutative counterpart, demonstrating two critical physical implications: first, the absence of causality violations in this minimally coupled quantum field theory, despite the underlying non-commutative geometry; and second, the vanishing of quantum geometry corrections for spacetime-averaged observables, effectively recovering classical locality in the macroscopic regime. These results suggest that non-commutative modifications at small scales need not destabilize causal structure or perturb large-scale physics, reinforcing theoretical confidence in such models as consistent bridges between quantum geometry and semiclassical phenomenology.\par
Furthermore, a quantum weak energy inequality is an indication of the mesoscopic stability of the quantum system, and is linked to microscopic and macroscopic stability \cite{Fewster_2003b, Fewster_2007a}. In particular, microscopic stability is given by the microlocal spectrum condition, which is a characterization of Hadamard states using the wavefront sets of $n$-point distributions \cite{RAD2, SV01, IN3, SVW, Sanders}. Fewster pioneered the application of microlocal analysis to quantum energy inequalities \cite{Fewster_2000, Fewster_2002, Fewster_2003a, Fewster_2006, Fewster_2008}, \textit{i.e.}\ the link from microscopic to mesoscopic stability. This entailed the use of the microlocal spectrum condition in restricting the energy density to the worldline of an observer, and in ensuring the convergence of the integrals in the energy inequalities. In light of the recent proof of the microlocal spectrum condition for states on an algebra of field operators deformed by warped convolutions \cite{BMV}, we plan a similar application of microlocal analysis in deriving another quantum weak energy inequality (which might very well turn out to be equivalent). Since non-commutative Minkowski spacetime does not admit the concept of a point, and thus of a worldline, it only makes sense to consider the existence of a quantum energy inequality over a worldvolume. However, the deformed energy density can be restricted to the worldline of an observer in commutative Minkowski spacetime, for which a quantum energy inequality might still exist; we anticipate that the formalism of wavefront sets will provide mathematical evidence against the existence of a quantum energy inequality along a worldline for a deformed quantum field theory.

\section*{Acknowledgments}
The authors would like to thank Rainer Verch, Markus Fröb and Philipp Dorau for various discussions. Furthermore, we extend our deepest appreciation to Rishabh Ballal for his substantial input and thoughtful engagement throughout the development and during the last stages of this work.
\appendix
\section{Normal Ordering}\label{appnorm}

The exponential factor including the momentum operator which induces the deformation can be written as a normal ordered term, using results from \cite{G}, namely
\begin{align*}
e^{ik\Theta P}=    \ord{e^{\int \frac{\dd[3]\bm{k}}{2\omega_{\bm{k}}}\,(e^{ip\Theta k}-1)\,a^*  (\bm{k} )a (\bm{k} )}}.
\end{align*}
Next, we turn to the question of the normal ordering of the deformed creation and annihilation operators. First, we consider 
\begin{align*}
 a_{\Theta } (\bm{k} )a^*_{\Theta }  (\bm{k}') &=
 e^{\frac{i}{2}k\Theta P}a (\bm{k} )e^{-\frac{i}{2}k'\Theta P}a^*(\bm{k}')  \\&
 = e^{-\frac{i}{2}k'\Theta k}e^{\frac{i}{2}(k-k')\Theta P}a^*(\bm{k}') a (\bm{k} )+ 2{ \omega_{\bm{k}}}\delta^3(\bm{k}-\bm{k}')\\&
 = e^{-\frac{i}{2}k'\Theta k}e^{\frac{i}{2}(k-k')\Theta k'} a^*(\bm{k}') e^{\frac{i}{2}(k-k')\Theta P} a (\bm{k} )+ 2{ \omega_{\bm{k}}}\delta^3(\bm{k}-\bm{k}')\\&
 = e^{ i k\Theta k'} a^*(\bm{k}') \ord{e^{\int \frac{\dd[3]\bm{p}}{2\omega_{\bm{p}}}\,(e^{i(k-k')\Theta p}-1)\,a^*  (\bm{p} )a(\bm{p})}}  a (\bm{k} )+ 2{ \omega_{\bm{k}}}\delta^3(\bm{k}-\bm{k}').
\end{align*}
The normal ordering of the former term is 
\begin{align*}
 \ord{a_{\Theta } (\bm{k} )a^*_{\Theta }  (\bm{k}')} &= a_{\Theta } (\bm{k} )a^*_{\Theta }  (\bm{k}')-\langle \Omega\vert  a_{\Theta } (\bm{k} )a^*_{\Theta }  (\bm{k}')\Omega\rangle\\& =e^{ i k\Theta k'} a^*(\bm{k}') \ord{e^{\int \frac{\dd[3]\bm{p}}{2\omega_{\bm{p}}}\,(e^{i(k-k')\Theta p}-1)\,a^*  (\bm{p} )a(\bm{p})}}  a (\bm{k} ),
\end{align*}
and matches with the naive normal ordering rule
\begin{align*}
    \ord{a_{\Theta } (\bm{k} )a^*_{\Theta }  (\bm{k}')} =e^{ik\Theta k'}a^*_{\Theta }  (\bm{k}')a_{\Theta } (\bm{k} ).
\end{align*}
This becomes obvious when writing out the right hand side 
\begin{align*}
    e^{ik\Theta k'}a^*_{\Theta }  (\bm{k}')a_{\Theta } (\bm{k} )&=  e^{ik\Theta k'}  e^{-\frac{i}{2}k'\Theta P} a^*   (\bm{k}')e^{ \frac{i}{2}k \Theta P} a (\bm{k} )\\
    &=  e^{ik\Theta k'}  a^*   (\bm{k}') e^{-\frac{i}{2}k'\Theta P} e^{ \frac{i}{2}k \Theta P} a (\bm{k} )\\
    &=  e^{ik\Theta k'}  a^*   (\bm{k}') e^{-\frac{i}{2}(k-k')\Theta P}   a (\bm{k} )\\&=e^{ i k\Theta k'} a^*(\bm{k}') \ord{e^{\int \frac{\dd[3]\bm{p}}{2\omega_{\bm{p}}}\,(e^{i(k-k')\Theta p}-1)\,a^*  (\bm{p} )a(\bm{p})}}  a (\bm{k} ),
\end{align*}
where in the last lines we used the commutativity $a^*(\bm{k}') e^{-\frac{i}{2}k'\Theta P}= e^{-\frac{i}{2}k'\Theta P}a^*(\bm{k}')$ due to the skew-symmetry of $\Theta$.

\section{Concrete Calculations}

\subsection{Deformed Products and the Waldmann operator}\label{app1}
In this section we supply the concrete derivations of the deformed product and application of the operator $S_{\theta}$ to the deformed annihilation and creation  operators. The deformed product of the deformed annihilation and creation operators is given by
\begin{align*}
    a_{\Theta}^*(\bm{k}')\times_{\theta}a_{\Theta} (\bm{k})&= (2\pi)^{-4} \iint \dd[4]{x}\,\dd[4]{y}\, e^{-ixy}
    \alpha_{\Theta x}(a^*_{\Theta}(\bm{k}'))  \alpha_{y}(a_{\Theta}(\bm{k} ))
\\&= (2\pi)^{-4}
\iint \dd[4]{x}\,\dd[4]{y}\, e^{-ixy}\, 
e^{ik'\Theta x} e^{-ik y} \,a^*_{\Theta}(\bm{k}')\,a_{\Theta}(\bm{k} ) \\&= 
\int \dd[4]{x}\,  
e^{ik'\Theta x} \delta(x+k )  \,a^*_{\Theta}(\bm{k}')\,a_{\Theta}(\bm{k} )\\&= 
e^{ik\Theta k'}  \,a^*_{\Theta}(\bm{k}')\,a_{\Theta}(\bm{k} ).
\end{align*}
Next, we calculate the action of the operator $S_{ {\theta} }$, i.e.\ 
\begin{align}\nonumber
   S_{\theta}\left(a_{\Theta}^*(\bm{k}') a_{\Theta} (\bm{k} )\right)&=  \pi^{-2} \int \dd[4]{u} \,e^{-\Vert u\Vert^2}\alpha_{\sqrt{\theta}u}(a_{\Theta}^*(\bm{k}') a_{\Theta} (\bm{k} ))\, 
\\& \nonumber= \pi^{-2} \int \dd[4]{u} \,e^{-\Vert u\Vert^2}\alpha_{\sqrt{\theta}u}(a_{\Theta}^*(\bm{k}'))\alpha_{\sqrt{\theta}u}( a_{\Theta} (\bm{k} ))\,  \\& \nonumber= \pi^{-2} \int  \dd[4]{u}\,e^{-\Vert u\Vert^2}
e^{ik' \sqrt{\theta}u} e^{-ik \sqrt{\theta}u} 
  \, a_{\Theta}^*(\bm{k}')   a_{\Theta} (\bm{k} ) \\& \nonumber= \pi^{-2} \int  \dd[4]{u}\,e^{-\Vert u\Vert^2}
e^{i u \sqrt{\theta}(k'-k)} 
\,  a_{\Theta}^*(\bm{k}')   a_{\Theta} (\bm{k} ) \\& =  e^{-\frac{\theta}{4}\Vert k'-k \Vert^2} \,  a_{\Theta}^*(\bm{k}')   a_{\Theta} (\bm{k} ). \label{eqsaastar}
\end{align}
The deformed product of the deformed operators is given by
\begin{align*}
   a_{\Theta} (\bm{k}' )\times_{\theta}a_{\Theta}^*(\bm{k})&= (2\pi)^{-4} \iint \dd[4]{x}\,\dd[4]{y}\, e^{-ixy}
    \alpha_{\Theta x}( a_{\Theta} (\bm{k}' ))  \alpha_{y}(a_{\Theta}^*(\bm{k}))
\\&= (2\pi)^{-4}
\iint \dd[4]{x}\,\dd[4]{y}\, e^{-ixy}\, 
e^{-ik'\Theta x} e^{iky} \,a_{\Theta}(\bm{k}' ) \, a^*_{\Theta}(\bm{k}) \\&=
\int \dd[4]{x}\,  
e^{-ik' \Theta x} \delta(x-k )  \,a_{\Theta}(\bm{k}' ) \, a^*_{\Theta}(\bm{k})\\&= 
e^{-ik'\Theta k}  \,a_{\Theta}(\bm{k}' ) \, a^*_{\Theta}(\bm{k})\\&= 
e^{-ik'\Theta k} e^{ik'\Theta k} \,a^*_{\Theta}(\bm{k}) \,a_{\Theta}(\bm{k}' ) +  2\omega_{\bm{k}} \,e^{-ik'\Theta k}\delta^3(\bm{k}'-\bm{k})  \\&= 
 a^*_{\Theta}(\bm{k}) \,a_{\Theta}(\bm{k}') +2 \omega_{\bm{k}}  \delta^3(\bm{k}'-\bm{k}).
\end{align*}
The action of the operator $S_{ {\theta} }$ on this deformed product is, 
\begin{align*}
   S_{\theta}\left( a_{\Theta} (\bm{k}')a_{\Theta}^*(\bm{k})  \right)&=     e^{-\frac{\theta}{4}\Vert k'-k \Vert^2} \,  a_{\Theta} (\bm{k}') a_{\Theta}^*(\bm{k}) ,
\end{align*}
where this follows directly from taking the adjoint of Equation \eqref{eqsaastar}. The deformed product of the deformed annihilation operators is given by
\begin{align*}
   a_{\Theta} (\bm{k}' )\times_{\theta}a_{\Theta} (\bm{k})&= (2\pi)^{-4} \iint \dd[4]{x}\,\dd[4]{y}\, e^{-ixy}
    \alpha_{\Theta x}( a_{\Theta} (\bm{k}' ))  \alpha_{y}(a_{\Theta} (\bm{k}))
\\&= (2\pi)^{-4} \iint \dd[4]{x}\,\dd[4]{y}\, e^{-ixy}
e^{-ik'\Theta x}e^{-iky} \, a_{\Theta} (\bm{k}' )   a_{\Theta} (\bm{k}) \\&= \int \dd[4]{x}\, e^{-ik'\Theta x} \delta(x+k)\,
 a_{\Theta} (\bm{k}' )   a_{\Theta} (\bm{k}) \\&=  
e^{ik'\Theta k } \, a_{\Theta} (\bm{k}' )   a_{\Theta} (\bm{k}) \\&=  
e^{ ik'\Theta k }e^{ -ik'\Theta k } \, a_{\Theta} (\bm{k})   a_{\Theta} (\bm{k}') \\&=  
  a_{\Theta} (\bm{k})   a_{\Theta} (\bm{k}') .
\end{align*}
Next, we calculate the action of the operator $S_{ {\theta} }$, i.e.\ 
\begin{align}\nonumber
   S_{\theta}\left( a_{\Theta} (\bm{k}' )a_{\Theta} (\bm{k})  \right)&= \pi^{-2} \int   \dd[4]{u}\,e^{-\Vert u\Vert^2}\alpha_{\sqrt{\theta}u}(a_{\Theta} (\bm{k}' )a_{\Theta} (\bm{k}) )\,
\\&\nonumber = \pi^{-2} \int  \dd[4]{u} \,e^{-\Vert u\Vert^2}\alpha_{\sqrt{\theta}u}(a_{\Theta} (\bm{k}'))\alpha_{\sqrt{\theta}u}( a_{\Theta} (\bm{k}))\,  \\&\nonumber = \pi^{-2} \int  \dd[4]{u} \,e^{-\Vert u\Vert^2}
e^{-ik' \sqrt{\theta}u} e^{-ik \sqrt{\theta}u} \,   (a_{\Theta} (\bm{k}')  a_{\Theta} (\bm{k}))\\&\nonumber = \pi^{-2} \int   \dd[4]{u}\,e^{-\Vert u\Vert^2}
e^{i u \sqrt{\theta}(k'+k)}  
\,    a_{\Theta} (\bm{k}' ) a_{\Theta}(\bm{k})  \\&   =  e^{-\frac{\theta}{4}\Vert k'+k \Vert^2} \,  a_{\Theta} (\bm{k}') a_{\Theta} (\bm{k}). \label{eqsaastar1}
\end{align}
The deformed product of the deformed creation operators is given by
\begin{align*}
   a^{*}_{\Theta} (\bm{k}')\times_{\theta}a^{*}_{\Theta} (\bm{k})&= (2\pi)^{-4} \iint \dd[4]{x}\,\dd[4]{y}\, e^{-ixy}
    \alpha_{\Theta x}( a^{*}_{\Theta} (\bm{k}'))  \alpha_{y}(a^{*}_{\Theta} (\bm{k}))
\\&= (2\pi)^{-4} \iint \dd[4]{x}\,\dd[4]{y}\, e^{-ixy}
e^{ ik'\Theta x}e^{ iky} \, a^{*}_{\Theta} (\bm{k}')   a^{*}_{\Theta} (\bm{k}) \\&= \int \dd[4]{x}\, \delta(x-k)
e^{ ik'\Theta x} \,  a^{*}_{\Theta} (\bm{k}')   a^{*}_{\Theta} (\bm{k}) \\&=   e^{ ik'\Theta k} \,  a^{*}_{\Theta} (\bm{k}')   a^{*}_{\Theta} (\bm{k}) \\&=   e^{ ik'\Theta k} e^{ -ik'\Theta k} \,  a^{*}_{\Theta} (\bm{k})   a^{*}_{\Theta} (\bm{k}') \\&=     a^{*}_{\Theta} (\bm{k})   a^{*}_{\Theta} (\bm{k}') .\end{align*}
The action of the operator $S_{ {\theta} }$ follows directly by taking the adjoint of Equation \eqref{eqsaastar1}:
\begin{align*}
   S_{\theta}\left( a^{*}_{\Theta} (\bm{k}')a^{*}_{\Theta} (\bm{k})  \right)&=        e^{-\frac{\theta}{4}\Vert k'+k \Vert^2} \,  a^{*}_{\Theta} (\bm{k}') a^{*}_{\Theta} (\bm{k}) .
\end{align*} 

\subsection{Proof of Lemma \ref{lemm1} }\label{proflemm1}
\begin{proof} 
 Let us write the product $X^{\pm*}_{\omega} \times_{\theta}X^{\pm}_{\omega}$ in a more explicit form, 
    \begin{align*}& 
  X^{\pm*}_{\omega} \times_{\theta}X^{\pm}_{\omega}=    
     \iint \dd[3]{\bm{k}'} \,p(\bm{k}')
     \dd[3]{\bm{k}}  \,p(\bm{k})\\&
     \quad \Bigl(\left(g(\omega-\omega_{\bm{k}'})a^*_{\Theta} (\bm{k}')\pm
  g(\omega+\omega_{\bm{k}'})a_{\Theta}(\bm{k}')\right) \times_{\theta}  \left(  g(\omega-\omega_{\bm{k}})a_{\Theta} (\bm{k} )\pm
    g(\omega+\omega_{\bm{k}})a_{\Theta}^*(\bm{k})\right) \Bigr)
\\&=   \iint \dd[3]{\bm{k}'}\dd[3]{\bm{k}} \,  p(\bm{k})p(\bm{k}')\\&
\quad \Bigl(  g(\omega-\omega_{\bm{k}'}) g(\omega-\omega_{\bm{k}})a^*_{\Theta} (\bm{k}')\times_{\theta} a_{\Theta} (\bm{k})\pm g(\omega-\omega_{\bm{k}'})  g(\omega+\omega_{\bm{k}})a^*_{\Theta} (\bm{k}')\times_{\theta}a_{\Theta}^*(\bm{k})
\\&\quad \pm g(\omega+\omega_{\bm{k}'})\, g(\omega-\omega_{\bm{k}})a_{\Theta}(\bm{k}')\times_{\theta}a_{\Theta} (\bm{k})
+ g(\omega+\omega_{\bm{k}'}) g(\omega+\omega_{\bm{k}}) a_{\Theta}(\bm{k}')  \times_{\theta}a_{\Theta}^*(\bm{k})\Bigr) \\&=   \iint \dd[3]{\bm{k}'}\dd[3]{\bm{k}} \,  p(\bm{k})p(\bm{k}')
\Bigl(  g(\omega-\omega_{\bm{k}'}) g(\omega-\omega_{\bm{k}})e^{ik\Theta k'}
a^*_{\Theta} (\bm{k}')a_{\Theta} (\bm{k})
\\&\quad \pm g(\omega-\omega_{\bm{k}'})  g(\omega+\omega_{\bm{k}})a^*_{\Theta} (\bm{k} )a_{\Theta}^*(\bm{k}' )
 \pm g(\omega+\omega_{\bm{k}'})\, g(\omega-\omega_{\bm{k}})a_{\Theta}(\bm{k} ) a_{\Theta} (\bm{k}')
\\& \quad + g(\omega+\omega_{\bm{k}'}) g(\omega+\omega_{\bm{k}})\left(
 a^*_{\Theta}(\bm{k}) \,a_{\Theta}(\bm{k}') +2\omega_{k}  \delta^3(\bm{k}-\bm{k}')\right)\Bigr).
\end{align*}
Next, we apply the operator $S_{\theta}$ rendering 
\begin{align*}
   & S_{\theta}\left(   X^{\pm*}_{\omega} \times_{\theta}X^{\pm}_{\omega} 
     \right) =\iint \dd[3]{\bm{k}'}\dd[3]{\bm{k}} \,  p(\bm{k})p(\bm{k}')\\&
\Bigl(  g(\omega-\omega_{\bm{k}'}) g(\omega-\omega_{\bm{k}})e^{ik\Theta k'}
  S_{\theta}\left( a^*_{\Theta} (\bm{k}')a_{\Theta} (\bm{k}) \right)
  \pm g(\omega-\omega_{\bm{k}'})  g(\omega+\omega_{\bm{k}})  S_{\theta}\left( a^*_{\Theta} (\bm{k} )a_{\Theta}^*(\bm{k}' ) \right)\\&
 \pm  g(\omega+\omega_{\bm{k}'})\, g(\omega-\omega_{\bm{k}}) S_{\theta}\left(  a_{\Theta}(\bm{k} ) a_{\Theta} (\bm{k}') \right)
  + g(\omega+\omega_{\bm{k}'}) g(\omega+\omega_{\bm{k}})  S_{\theta}\left( 
 a^*_{\Theta}(\bm{k}) \,a_{\Theta}(\bm{k}')  \right)\Bigr)
\\&+ \int \dd[3]{\bm{k}} \, 2\omega_{k}\, p(\bm{k})^2
g(\omega + \omega_{\bm{k}}) ^2 \, \\&  
 =\iint \dd[3]{\bm{k}'} \dd[3]{\bm{k}} \,  p(\bm{k})p(\bm{k}') 
\Bigl(  g(\omega-\omega_{\bm{k}'}) g(\omega-\omega_{\bm{k}})e^{ik\Theta k'}
 e^{-\frac{\theta}{4}\Vert k'-k \Vert^2}\, a^*_{\Theta} (\bm{k}')a_{\Theta} (\bm{k})  
\\&\quad \pm g(\omega-\omega_{\bm{k}'})  g(\omega+\omega_{\bm{k}}) e^{-\frac{\theta}{4}\Vert k'+k \Vert^2}\,a^*_{\Theta} (\bm{k} )a_{\Theta}^*(\bm{k}' )  
 \\&\quad \pm  g(\omega+\omega_{\bm{k}'})\, g(\omega-\omega_{\bm{k}}) e^{-\frac{\theta}{4}\Vert k'+k \Vert^2}\, a_{\Theta}(\bm{k} ) a_{\Theta} (\bm{k}')  
\\&\quad + g(\omega+\omega_{\bm{k}'}) g(\omega+\omega_{\bm{k}}) e^{-\frac{\theta}{4}\Vert k'-k \Vert^2}\, 
 a^*_{\Theta}(\bm{k}) \,a_{\Theta}(\bm{k}')  \Bigr)
+ \int \dd[3]{\bm{k}} \,  2\omega_{k}\,  p(\bm{k})^2
g(\omega + \omega_{\bm{k}}) ^2,  \, 
\end{align*} 
where the former application of the deformed product and the operator $S_{\theta}$ is calculated term for term in Appendix \ref{app1}.  An integral over $\omega$  renders the result.\end{proof}

\subsection{Proof of Lemma \ref{lemm2}}\label{proflemm2}

\begin{proof}
Using the commutation relations we have
\begin{align*}
    (X^{\pm}_\omega)^*X^{\pm}_\omega
        &= \iint  {\dd[3]{\bm{k}}\, p(\bm{k})\, \dd[3]{\bm{k}'}}\, p(\bm{k}') \\&\quad \Bigl(\e^{i k\Theta k'} {g(\omega + \omega_{\bm{k}})} g(\omega + \omega_{\bm{k}'}) a^*_\Theta (\bm{k}') a_\Theta (\bm{k}) + {g(\omega - \omega_{\bm{k}})}  {g(\omega - \omega_{\bm{k}'})} a_\Theta^* (\bm{k}) a_\Theta(\bm{k}') 
       \\&\quad \pm  {g(\omega + \omega_{\bm{k}})}       g(\omega - \omega_{\bm{k}'}) a_\Theta(\bm{k})  a_\Theta (\bm{k}') 
       \pm {g(\omega - \omega_{\bm{k}})} \, {g(\omega + \omega_{\bm{k}'})} a_\Theta^* (\bm{k})a_\Theta^* (\bm{k}') \Bigr)
        \\&\quad + \int_0^\infty \dd{\omega} \int \dd[3]{\bm{k}}  \,  2\omega_{k}\, p(\bm{k})^2
        g(\omega + \omega_{\bm{k}}) ^2  \,,
\end{align*}
and, on applying the Waldmann operator $S_{\theta}$, we obtain 
\begin{align*}
     &   \int_0^{\infty} \,\dd{\omega}\,S_{\theta}((X^{\pm}_\omega)^*X^{\pm}_\omega)
        = \iint  {\dd[3]{\bm{k}} \dd[3]{\bm{k}'}}\, p(\bm{k})
       p(\bm{k}')    \\&  \quad \int_0^{\infty} \,\dd{\omega}\,
       \Bigl(\e^{\i k \Theta k'}{g(\omega + \omega_{\bm{k}})}  {g(\omega + \omega_{\bm{k}'})} S_{\theta}(a_\Theta^* (\bm{k}') a_\Theta(\bm{k})) +  {g(\omega - \omega_{\bm{k}})} g(\omega - \omega_{\bm{k}'}) S_{\theta}(a^*_\Theta (\bm{k}) a_\Theta (\bm{k}') ) \\&
        \quad \pm  {g(\omega + \omega_{\bm{k}})}       g(\omega - \omega_{\bm{k}'})S_{\theta}(a_\Theta(\bm{k})  a_\Theta (\bm{k}') )
       \pm {g(\omega - \omega_{\bm{k}})} \, {g(\omega + \omega_{\bm{k}'})} S_{\theta}(a_\Theta (\bm{k})a_\Theta (\bm{k}') ) \Bigr)
       \\& \quad +
\int_0^\infty \dd{\omega} \int  \dd[3]{\bm{k}} \,   2\omega_{k}\,p(\bm{k})^2
g(\omega + \omega_{\bm{k}}) ^2  \, 
       \\&=    \iint  {\dd[3]{\bm{k}} \dd[3]{\bm{k}'}} p(\bm{k})
       p(\bm{k}')   \int_0^{\infty} \,\dd{\omega}\,
       \Bigl(\e^{\i k \Theta k'}{g(\omega + \omega_{\bm{k}})}  {g(\omega + \omega_{\bm{k}'})} e^{-\frac{\theta}{4}\Vert k'-k \Vert^2}(a_\Theta^* (\bm{k}')a_\Theta(\bm{k}))  \\& \quad +  {g(\omega - \omega_{\bm{k}})} g(\omega - \omega_{\bm{k}'}) e^{-\frac{\theta}{4}\Vert k'-k \Vert^2}(a^*_\Theta (\bm{k}) a_\Theta (\bm{k}') ) \\&
        \quad \pm  {g(\omega + \omega_{\bm{k}})}       g(\omega - \omega_{\bm{k}'})e^{-\frac{\theta}{4}\Vert k'+k \Vert^2}(a_\Theta(\bm{k})  a_\Theta (\bm{k}') )
   \\& \quad \pm {g(\omega - \omega_{\bm{k}})} \, {g(\omega + \omega_{\bm{k}'})} e^{-\frac{\theta}{4}\Vert k'+k \Vert^2}(a^* _\Theta (\bm{k})a_\Theta^* (\bm{k}') ) \Bigr)
     +
\int_0^\infty \dd{\omega} \int  {\dd[3]{\bm{k}}}  \, 2\omega_{k}\,  p(\bm{k})^2
g(\omega + \omega_{\bm{k}}) ^2  \, 
       \\&=   W^{2,\pm}_{\Theta}+
\int_0^\infty \dd{\omega} \int  {\dd[3]{\bm{k}}} \,  2\omega_{k}\, p(\bm{k})^2
g(\omega + \omega_{\bm{k}}) ^2  \,.
\end{align*}

\end{proof}
\subsection{Proof of Proposition \ref{lemm3}}\label{proflemm3}
\begin{proof}  
Using Lemma  \ref{lemm1} and Lemma \ref{lemm2} we write the operator $W^{\pm}_{\Theta}$ as 
\begin{align*}
  W_{\Theta}^{\pm}&= \frac{1}{2}\iint \dd[3]{\bm{k}'}\dd[3]{\bm{k}} \,  p(\bm{k})p(\bm{k}')\\
	& \nonumber
\quad \Bigl(\Bigl(e^{-\frac{\theta}{4}\Vert k'-k \Vert^2}\, \int_{0}^{\infty} \dd{\omega}\, g(\omega-\omega_{\bm{k}'}) g(\omega-\omega_{\bm{k}})\Bigr)\, e^{ik\Theta  k'}
  a^*_{\Theta} (\bm{k}' )a_{\Theta} (\bm{k}  )  \\&\nonumber
  \quad +\Bigl( e^{-\frac{\theta}{4}\Vert k'-k \Vert^2}\int_{0}^{\infty} \dd{\omega}\,  g(\omega+\omega_{\bm{k}'})  g(\omega+\omega_{\bm{k}})\Bigr)\, 
 a^*_{\Theta}(\bm{k}) \,a_{\Theta}(\bm{k}' )  
\\& \nonumber \quad \pm \Bigl(e^{-\frac{\theta}{4}\Vert k'+k \Vert^2}\,\int_{0}^{\infty} \dd{\omega}\,  g(\omega+\omega_{\bm{k}'})\, g(\omega-\omega_{\bm{k}})  \Bigr)\, a_{\Theta}(\bm{k} ) a_{\Theta} (\bm{k}' ) \\& \quad \pm\Bigl(e^{-\frac{\theta}{4}\Vert k'+k \Vert^2}\,\int_{0}^{\infty} \dd{\omega}\, g(\omega-\omega_{\bm{k}'})  g(\omega+\omega_{\bm{k}}) \Bigr)a^*_{\Theta} (\bm{k} )a_{\Theta}^*(\bm{k}'  )  
  \nonumber \\
	& \nonumber
\quad + \Bigl(e^{-\frac{\theta}{4}\Vert k'-k \Vert^2}\, \int_{0}^{\infty} \dd{\omega}\, g(\omega+\omega_{\bm{k}}) g(\omega+\omega_{\bm{k}'})\Bigr)\, e^{ik\Theta  k'}
  a^*_{\Theta} (\bm{k}' )a_{\Theta} (\bm{k}  )  \\&\nonumber
  \quad +\Bigl( e^{-\frac{\theta}{4}\Vert k'-k \Vert^2}\int_{0}^{\infty} \dd{\omega}\,  g(\omega-\omega_{\bm{k}})  g(\omega-\omega_{\bm{k}'})\Bigr)\, 
 a^*_{\Theta}(\bm{k}) a_{\Theta}(\bm{k}' )  
\\& \nonumber \quad \pm \Bigl(e^{-\frac{\theta}{4}\Vert k'+k \Vert^2}\,\int_{0}^{\infty} \dd{\omega}\,  g(\omega+\omega_{\bm{k}})\, g(\omega-\omega_{\bm{k}'})  \Bigr)\, a_{\Theta}(\bm{k} ) a_{\Theta} (\bm{k}' ) \\& \quad \pm\Bigl(e^{-\frac{\theta}{4}\Vert k'+k \Vert^2}\,\int_{0}^{\infty} \dd{\omega}\, g(\omega-\omega_{\bm{k}})  g(\omega+\omega_{\bm{k}'}) \Bigr)\, a^*_{\Theta} (\bm{k} )a_{\Theta}^*(\bm{k}'  )  
  \Bigr) \nonumber
\\& \quad +
\int_0^\infty \dd{\omega} \int {\dd[3]{\bm{k}}}\,  2\omega_{k}\, p(\bm{k})^2
g(\omega + \omega_{\bm{k}}) ^2  \,,
\end{align*}
and summarizing the former terms renders 
\begin{align*}   
  W_{\Theta}^{\pm}&=\frac{1}{2}\iint \dd[3]{\bm{k}}'\dd[3]{\bm{k}} \,  p(\bm{k})p(\bm{k}') \Bigl(
F_{\theta}(\bm{k},\bm{k}')\, \Bigl( e^{ik\Theta  k'}
  a^*_{\Theta} (\bm{k}' )a_{\Theta} (\bm{k}  ) +
 a^*_{\Theta}(\bm{k}) \,a_{\Theta}(\bm{k}' )  \Bigr)
  \\& \quad \pm  {G}_{\theta}(\bm{k},\bm{k}')\, \Bigl(a_{\Theta}(\bm{k} ) a_{\Theta} (\bm{k}' )
    +a^*_{\Theta} (\bm{k} )a_{\Theta}^*(\bm{k}'  )\Bigr)  \Bigr)  +
\int_0^\infty \dd{\omega} \int  {\dd[3]{\bm{k}}} \,  2\omega_{k}\, p(\bm{k})^2
g(\omega + \omega_{\bm{k}}) ^2  \,  .
\end{align*}
\end{proof}  

\bibliographystyle{utphys}
\bibliography{refs.bib}

\end{document}